\newtheorem{theorem}{Theorem}[section]
\newtheorem{lemma}[theorem]{Lemma}
\renewcommand{\c@secnumdepth}{0}
\newcommand{\tr}{{\text{tr}}}
\newcommand{\B}{\textbf}
\begin{document}

\title{Characterizing quantum state-space with a single quantum measurement}
\author{Matthew B. Weiss}
\affiliation{University of Massachusetts, Boston: QBism Group}
\date{\today}

\begin{abstract}
Can the state-space of $d$-dimensional quantum theory be derived from studying the behavior of a single ``reference" measuring device? The answer is yes, if the measuring device corresponds to a complex-projective 3-design. In this privileged case, not only does each quantum state correspond to a probability-distribution over the outcomes of a single measurement, but also the probability-distributions which correspond to quantum states can be elegantly characterized as those which respect a generalized uncertainty principle. The latter takes the form of a lower-bound on the variance of a natural class of observables as measured by the reference. We give simple equations which pure-state probability distributions must satisfy, and contextualize these results by showing how 3-designs allow the structure-coefficients of the Jordan algebra of observables to be extracted from the probabilities which characterize the reference measurement itself. This lends credence to the view that quantum theory ought to be primarily understood as a set of normative constraints on probability assignments which reflect nature's lack of hidden variables, and further cements the significance of 3-designs in quantum information science.
\end{abstract}

\maketitle

\section{Introduction}

In 1927, Niels Bohr introduced the notion of \emph{complementarity}, that not all aspects of a physical system may be simultaneously definite, as the distinctive feature of the new quantum mechanics \cite{Bohr1928-vp, baggott2011quantum}. For example, the Heisenberg uncertainty principle, $\sigma_x \sigma_p\ge \frac{1}{2}\hbar$, tells us that if we experiment upon an ensemble of identically prepared particles, then a small variance in the measured position of the particles implies a large variance in their momentum, and vice versa. In particular, regardless of the choice of ensemble, the variance of the two quantities cannot be made arbitrarily small together while remaining consistent with quantum theory. The uncertainty principle may be generalized e.g.,\ to arbitrary pairs of observables, and even to collections of observables \cite{hou2016uncertaintyrelationsmultiobservables}. If Bohr was right that complementarity is the defining feature of quantum theory, then it ought to be possible to characterize ``quantum states'' as nothing other than probability-assignments which satisfy appropriate uncertainty relations for all possible observables.

From this point of view, what is essential is not the traditional Hilbert space formalism, but instead the constraints quantum theory urges on probability-assignments: indeed, the former can be seen as a convenient mathematical technique for imposing those very constraints. This is the central contention of QBism \cite{fuchs2019qbismquantumtheoryheros}, a subjective Bayesian interpretation of quantum mechanics, which holds that quantum theory should not be viewed as a description of physical reality but rather as a set of consistency-constraints on probability-assignments motivated by nature's lack of hidden variables. In particular, QBists side with Schr\"odinger in viewing the quantum state as nothing more than a ``catalogue of expectations'' \cite{cat}. Indeed, while Bohr and Heisenberg focused on relations between observables like position and momentum, contemporary quantum information theory contemplates a more general class of measurements, so-called \emph{informationally-complete} (IC) measurements \cite{Caves_2002}. Remarkably, assigning appropriate probabilities to the outcomes of a single IC measurement is equivalent to assigning a quantum state. Thus one may take Schr\"odinger's  ``catalogue of expectations'' a step further, and identify quantum states with probability-distributions directly.

 The caveat is that while all quantum states correspond to probability-distributions, not all probability-distributions correspond to quantum states \cite{appleby2011propertiesqbiststatespaces}. On the Hilbert space side, such invalid distributions correspond to self-adjoint matrices which are not positive-semidefinite, and thus cannot be regarded as density matrices. A picture thereby emerges of quantum state-space as a privileged \emph{subset} of the probability-simplex, corresponding to just those probability-distributions which map back to valid states. The shape of the subset depends on the choice of informationally-complete measurement used as a ``reference,'' and can be derived by calculating certain quantities associated with the Hilbert space representation of the measurement. From a foundational perspective, however, one might wonder whether it is possible to characterize the shape of such a privileged subset without reference to Hilbert space as such. Related questions have recently been asked in the context of GPT tomography \cite{PRXQuantum.2.020302, PhysRevA.105.032204}, which provides a theory-agnostic approach to reconstructing state and effect spaces from the experimental data collected from a wide range of preparation and measurement procedures. Certainly, a state is invalid if it implies a negative probability for some effect. From a QBist point of view, as we shall see, one may rewrite the Born rule entirely in terms of probabilities with respect to a reference measurement. One may thus rule out a probability-distribution if the Born rule, formulated in terms of probabilities, yields a negative number for some measurement outcome. In this way, the shape of state-space would emerge out of the demand for probabilistic consistency.
  
 Nevertheless, just as a single IC reference measurement may characterize the preparation of a quantum system entirely in terms of reference probabilities, one might wonder whether a single specially chosen reference device could perform ``state-space tomography'' by appealing only to probabilities assigned to that single device. Probabilistic consistency with this one device would imply consistency with every possible measurement, and provide an important proof of principle for the QBist approach to quantum mechanics. In this work, we demonstrate that this can in fact be done. Taking one's reference measurement to be a \emph{complex-projective 3-design} allows the set of valid probability-distributions---that is, the shape of quantum state-space---to be elegantly characterized by an uncertainty principle\footnote{We note that whereas the usual uncertainty principle relates probability-distributions with respect to different measurements, our generalization is formulated with respect to a single measurement.}, precisely in the spirit of Bohr. Crucially, the only building blocks needed to formulate this principle are reference device probabilities.

  In particular, we will show that probability-assignments to the outcomes of the reference device cannot be too sharp in a prescribed way: they must satisfy a lower-bound on the variance with respect to any of a natural class of observables. This is so even as those same probability-assignments may imply a sharp distribution for the outcomes of some alternative measurement via the Born rule. This is a dramatic reversal of the situation classically, where any alternative measurement may be regarded as a coarse-graining of a reference measurement e.g., of the positions and momenta of a set of particles. In the classical case, on the one hand, certainty is achievable for the reference; on the other hand, one cannot achieve more certainty about alternative measurements than about the reference. Thus our result underscores in a novel and perspicacious way the degree to which quantum mechanics resists hidden variable interpretations. Moreover, the notion of defining the set of valid probability-distributions in terms of an uncertainty principle is a promising approach to constructing generalizations of quantum theory in the spirit of  \emph{generalized probabilistic theories} \cite{M_ller_2021, barnum2013postclassicalprobabilitytheory} or in the QBist literature, the \emph{qplex} research program \cite{Appleby_2017}. Independently motivating the features of a 3-design representation, and demonstrating how a Hilbert space representation arises from them (rather than the other way around), would represent a significant advance in the ongoing quest to derive the quantum formalism from satisfying quantum information-theoretic principles, and at the same time likely lead to a novel characterization of 3-designs themselves. Indeed, 3-designs are of great contemporary interest due to their special role in the theory of classical shadow estimation, where employing a 3-design allows one for example to achieve constant sample complexity in fidelity estimation independent of system size \cite{Huang_2020,mao2024magicquditshadowestimation,Kliesch_2021, chen2024nonstabilizernessenhancesthriftyshadow}. We hope the present work places these developments in a broader context.
  
We begin by reviewing the probability-first formalism for quantum mechanics furnished by an \emph{informationally-complete reference device}, and then lay out the basic features of complex projective $3$-designs. These mathematical preliminaries aside, we derive scalar constraints on probability-assignments corresponding to pure-states: pure-state probability-assignments turn out to live in the intersection of 2-norm and 3-norm spheres of specified radii restricted to a natural subspace. Operationally, these constraints can be interpreted as bounds on the agreement between several copies of the reference device when fed identically prepared systems; conceptually, they can be understood as entropic uncertainty principles \cite{PhysRevLett.60.1103}. We then establish a vector constraint on pure-probability assignments, which draws our attention to a particular 3-index tensor built out of the probabilities which characterize the reference device itself. To handle the case of mixed states, we report the key insight that for a 3-design measurement, one can relate the variance of an observable as measured directly to the variance of the observable as measured by the reference. This is what allows quantum state-space to be characterized in its entirety by a single uncertainty principle, which lower-bounds the variance of any of a natural class of observables. In closing, we observe that the reason 3-designs play such a privileged role is that for just these measurements, the structure-coefficients for the Jordan algebra of observables \cite{Faraut1994} can be expressed solely in terms of the probabilities which characterize the reference device.  The essence of this observation was already made in \cite{obst2024wignerstheoremstabilizerstates, Heinrich_2019}, but its significance for probabilistic representations of quantum mechanics was left unexplored.

\section{The reference device formalism} 

In quantum mechanics, the most general form of a measurement with a finite number of outcomes consists in a set $\{E_i\}_{i=1}^n$ of positive-semidefinite matrices called \emph{effects}, acting on a Hilbert space $\mathcal{H}_d$ satisfying $\sum_{i} E_i = I$. If the effects span the $d^2$-dimensional operator space, we call the measurement \emph{informationally-complete}:  the probabilities $P(E_i|\rho)=\tr(E_i\rho)$ fully characterize the density matrix $\rho$ representing a quantum state, which must itself be positive-semidefinite with $\tr(\rho)=1$.  We call a measure-and-reprepare device which performs an informationally-complete measurement and conditional on the outcome, prepares one of a set $\{\sigma_i\}_{i=1}^n$ of informationally-complete reference states, a \emph{reference device}. In what follows, we will always assume that our reference device prepares a reference state $\sigma_i = \tr(E_i)^{-1}E_i$ proportional to the reference effect representing the outcome.

The minimal number of effects in an IC-measurement is $d^2$: moreover, at best an IC-measurement may furnish a linearly-independent, but not orthonormal basis \cite{e16031484, cuffaro2024quantumstatesmaximalmagic, debrota2020varietiesminimaltomographicallycomplete}, and more generally an informationally-\emph{over}complete set. We must therefore take up the matter of its dual representation---with a probabilistic  twist. Let $|\sigma_i) =\text{vec}(\sigma_i)= (I \otimes \sigma_i)\sum_i |i,i\rangle$ be the vectorization of a reference state $\sigma_i$, and similarly let $(E_i|=\sum_i\langle i,i|(I \otimes E_i)$ be the vectorization of a reference effect. In general,  $(A|B)=\tr(A^\dagger B)$, and so arranging $(E_i|$ into the rows of a matrix $\B{E}$, and $|\sigma_i)$ into the columns of a matrix $\textbf{S}$, we can write the conditional-probability matrix with elements $P(E_i|\sigma_j)$, for the probability of a reference outcome given a reference preparation, $P\equiv \B{ES}$. By informational-completeness, these probabilities fully characterize the reference device itself, and they will play a fundamental role in the sequel.

We call a \emph{Born matrix} any matrix $\Phi$ which satifies  $P\Phi P = P$, the defining equation of 1-inverse of $P$ \cite{Ben-Israel2003-ym}. It follows from informational-completeness that $P\Phi P = P \Longleftrightarrow \B{S}\Phi \B{E}=I$ \cite{weiss2024depolarizingreferencedevicesgeneralized}, which provides a resolution of the identity, and thus a dual representation $|\rho) = \B{S}\Phi \B{E}|\rho)$ or,
\begin{align}
\rho=\sum_{ij}\Phi_{ij}P(E_j|\rho)\sigma_i.
\end{align}
If the measurement operators (and states) are linearly independent, then $P$ will be invertible, and thus $\Phi = P^{-1}$. Otherwise, there will be a variety of choices for the Born matrix\footnote{The $1$-inverses of a matrix $P=U\Sigma V^\dagger$ may all be calculated from its singular value decomposition via 
\begin{align}
	\Phi = V \begin{pmatrix}\sigma^{-1} & A \\ B & C \end{pmatrix}U^\dagger\nonumber,
\end{align}
where $A, B, C$ are completely arbitrary matrices, and $\sigma$ is the diagonal matrix of nonzero singular values. A typical example is the Moore-Penrose pseudo-inverse, for which $A=B=C=0$ \cite{Ben-Israel2003-ym}.}, and different assignments of probabilities will lead to the same ascription of a density matrix. 

Consider now some alternative measurement $\{A_i\}_{i=1}^m$. We can write the Born rule probability $P(A_i|\rho)=\tr(A_i\rho)$ in terms of reference probabilities,
\begin{align}
P(A_i|\rho) &= \tr(A_i\rho)=(A_i|\B{S}\Phi \B{E}|\rho)\\
&=\sum_{jk}P(A_i|E_j)\Phi_{jk}P(E_k|\rho)\nonumber,
\end{align}
where we have taken $P(A_i|E_j)\equiv P(A_i|\sigma_j)$, using the outcomes of the reference device as a proxy for its state preparations. One may appreciate that the Born rule has become a simple \emph{deformation} \cite{Ferrie_2011} of the law of total probability $P(A_i) = \sum_j P(A_i|E_j)P(E_j)$.

In particular, taking $\B{A}=\B{E}$, we have that $P(E_i|\rho)=\sum_{jk}P(E_i|\sigma_j)\Phi_{jk}P(E_k|\rho)$. This is a fundamental consistency-criterion in an overcomplete probability representation. It follows from the defining equation of a 1-inverse that $P\Phi$ is a projector, and in what follows we may take $\Phi$ to be invertible, so that $P\Phi$ projects onto $\text{col}(P)$: recall the column-space of $P$ (or equivalently, its range) is the span of its columns. Now $P=\B{ES}$ is in fact a full-rank factorization of $P$, and hence the columns of $\B{E}$ form a basis for $\text{col}(P)$ \cite{full_rank_factorization}. Thus any vector with components $x_i =\tr(E_iX)$, that is, $x=\B{E}|X)$, must lie in $\text{col}(P)$. Conversely, if $x\in \text{col}(P)$, there must exist a vector $\tilde{x}$ such that $x_i= \sum_j P(E_i|\sigma_j)\tilde{x}_j=\tr\left(E_i \sum_j \tilde{x}_j\sigma_j\right)=\tr(E_i \tilde{X})$ for some operator $\tilde{X}$.

\section{Making designs}

It follows from the representation theory of the unitary and symmetric groups \cite{harrow2013churchsymmetricsubspace, bacon2005quantumschurtransformi} that the \emph{$t$-th moment of quantum state-space}, that is, the $t$-th tensor power of a pure-state averaged over all pure-states is 
\begin{align}
\int |\psi\rangle\langle \psi|^{\otimes t} d\psi = \binom{d+t-1}{t}^{-1}\Pi_{\text{sym}^t},
\end{align}
where $d\psi$ denotes the Haar measure on pure-states, and $\Pi_{\text{sym}^t}$ is the projector onto the permutation-symmetric subspace on $t$ tensor-factors \cite{Waldron2018}. Note $\tr(\Pi_{\text{sym}^t})=\binom{d+t-1}{t}$ is just the dimension of that subspace. As $\Pi_{\text{sym}^t}$ can be expressed as a sum over all permutation operators, we have in fact
\begin{align}
	\int |\psi\rangle\langle \psi|^{\otimes t} d\psi &=\binom{d+t-1}{t}^{-1}\frac{1}{t!}\sum_{\pi\in S_t}T_\pi
\end{align}
where $T_{\pi} = \sum_{a,b,c, \dots}|\pi(a), \pi(b), \pi(c), \dots\rangle\langle a, b, c, \dots|$.

A \emph{quantum  state $t$-design}, also called a \emph{complex-projective t-design} \cite{Waldron2018}, is an ensemble of pure-states $\{p_i, |\psi_i\rangle\}_{i=1}^n$ which satisfy
\begin{align}
\sum_{i=1}^n p_i|\psi_i\rangle\langle \psi_i|^{\otimes t} =\int |\psi\rangle\langle \psi|^{\otimes t} d\psi :
\end{align}
the average over the design-ensemble mimics the average over all pure-states up to the $t$-th moment. We will call a design \emph{unbiased} if $\forall i: p_i=\frac{1}{n}$. Here 
\begin{align}
n \ge \binom{d-1+\lfloor t/2 \rfloor}{\lfloor t/2 \rfloor }	\binom{d-1 +\lceil t/2 \rceil}{\lceil t/2 \rceil} \nonumber.
\end{align}
We note that a $t$-design of any order always exists for sufficiently large $n$ \cite{SEYMOUR1984213}, and a $t$-design is also a $(t-1)$-design. For $t=1$, we have $\sum_i p_i|\psi_i\rangle\langle \psi_i| =\frac{1}{d}I$ which shows that a $1$-design furnishes a set of rank-1 projectors which, rescaled, sum to the identity: thus a 1-design is a quantum measurement. For a 2-design 
\begin{align}
\label{swappy}
\sum_i p_i |\psi_i\rangle\langle \psi_i|^{\otimes 2}=\frac{1}{d(d+1)}(I \otimes I + \mathcal{S}),
\end{align}
where $\mathcal{S}$ is the swap operator: two typical examples are symmetric informationally-complete (SIC) states and states corresponding to a complete set of mutually unbiased bases (MUBs) \cite{bengtsson2017discretestructuresfinitehilbert}. In fact, it  follows from Eq.\ \ref{swappy} that for any unbiased 2-design reference device, we can take $\Phi=(d+1)I - \frac{d}{n}J$, where $J$ is the Hadamard identity (Appendix \ref{2-design-Phi}).  The Born rule then takes the profoundly elegant form
\begin{align}
P(A|\rho) &=\sum_i P(A|E_i)\left[(d+1)P(E_i|\rho)-\frac{d}{n}\right],
\end{align}
 which was given an independent motivation in the case that $n=d^2$ in \cite{DeBrota_2020}, and is related to the fact that 2-designs are optimal for linear quantum state tomography \cite{Scott_2006, Zhu_2011}. The terrain of 3-designs is an area of active investigation: several examples were presented in \cite{S_omczy_ski_2020}, and more general constructions were given in \cite{Gross_2021, zhu2024momentsquditcliffordorbits}. In particular, 3-designs have been studied for their use in the theory of classical shadows \cite{Huang_2020,mao2024magicquditshadowestimation,Kliesch_2021} since they lead to well-controlled variance in the estimation of expectation values. In $d=2$, the complete set of MUBs forms a $3$-design, as does the set of $n$-qubit stabilizer states \cite{Zhu_2017}.  When $t=3$, $n\ge \frac{1}{2}d^2(d+1)$, a bound which is not tight \cite{Waldron2018}.

\section{The shape of quantum state-space}

\subsection{Bounding agreement}
\label{Agreement-probabilities}

We begin by providing a characterization of pure-state probability-distributions in terms of a set of entropic uncertainty principles. They can be motivated by considered the following scenario. Suppose we prepare states $\rho_1, \dots, \rho_t$, and send each into its own copy of a reference device $\{E_i, \sigma_i\}_{i=1}^n$. What is the probability that all $t$ reference devices give the same outcome? In other words, we are interested in the agreement-probability
\begin{align}
&P(\text{agree}|\rho_1, \dots, \rho_t)\\
&=\sum_{i=1}^n \prod_{j=1}^t P(E_i|\rho_j)= \tr\left(\sum_{i=1}^n E_i^{\otimes t} \otimes_{j=1}^t \rho_j\right)\nonumber.
\end{align}
Assuming the reference device is unbiased and that the reference states are proportional to effects $(E_i = \frac{d}{n}\sigma_i)$, we have $P(\text{agree}|\rho_1, \dots, \rho_t)= \frac{d^t}{n^{t-1}}\tr\left(\frac{1}{n}\sum_{i=1}^n \sigma_i^{\otimes t} \otimes_{j=1}^t \rho_j\right)$. If we further assume that the reference device forms a $t$-design,  $\frac{1}{n}\sum_i\sigma_i^{\otimes t} = \binom{d+t-1}{t}^{-1}\Pi_{\text{sym}^t}$, and so the agreement-probability can be written
\begin{align}
&P(\text{agree}|\rho_1, \dots, \rho_t)\\
&=\frac{d^t}{n^{t-1}}\binom{d+t-1}{t}^{-1}\frac{1}{t!}\sum_{\pi\in S_t}\tr(T_\pi\otimes_{j=1}^t \rho_j)\nonumber.
\end{align}
To evaluate expressions like this, it suffices to consider traces with cyclic permutations. For $t=2$, the swap operator $\mathcal{S}=\sum_{ab}|b,a\rangle\langle a, b|$ yields
\begin{align}
&\tr\left(	(X\otimes Y)\sum_{ab}|b,a\rangle\langle a, b|\right)\\
&=\sum_{ab}\langle a|X|b\rangle \langle b|Y|a\rangle=\tr(XY)\nonumber,
\end{align}
while similarly, for $t=3$, a cyclic permutation  of three elements delivers
\begin{align}
&\tr\left(	(X \otimes Y \otimes Z)\sum_{abc}|b,c,a\rangle\langle a, b,c|\right)\\
&=\sum_{abc} \langle a|X|b\rangle\langle b|Y|c\rangle\langle c|Z|a\rangle=\tr(XYZ)\nonumber.
\end{align}
In light of this, we have for the order-2 agreement-probability,
\begin{align}
P(\text{agree}|\rho_1, \rho_2)
&=\frac{1}{d+1}\left(\frac{d}{n}\right)\Big[\tr(\rho_1)\tr(\rho_2)+\tr(\rho_1\rho_2)\Big].
\end{align}
Now $\tr(\rho_1\rho_2)\leq \sqrt{\tr(\rho_1^2)\tr(\rho_2^2)}$, while the purity satisfies $\tr(\rho^2)\leq 1$. Thus when $\rho_1=\rho_2=\rho$ pure, we saturate the upper-bound of this quantity. For the lower-bound, we note $\tr(\rho_1\rho_2)\ge0$ with equality if and only if $\rho_1, \rho_2$ are orthogonal, which leads to bounds
\begin{align}
\left(\frac{d}{n}\right)\frac{1}{d+1}\leq\sum_i P(E_i|\rho_1)P(E_i|\rho_2)\leq \left(\frac{d}{n}\right)\frac{2}{d+1}.
\end{align}
Similarly, for $t=3$, we find
\begin{small}
\begin{align}
&P(\text{agree}|\rho_1, \rho_2, \rho_3)\\
 &= \frac{1}{(d+1)(d+2)}\left(\frac{d}{n}\right)^2\Big[\tr(\rho_1)\tr(\rho_2)\tr(\rho_3) +\tr(\rho_1)\tr(\rho_2\rho_3)\nonumber\\
&+\tr(\rho_2)\tr(\rho_1\rho_3)+\tr(\rho_3)\tr(\rho_1\rho_2)+\tr(\rho_1\rho_2\rho_3)+\tr(\rho_1\rho_3\rho_2)\Big]\nonumber,
\end{align}
\end{small}
\!\!which is maximized when $\rho_1=\rho_2=\rho_3=\rho$ pure, so that $\tr(\rho^2)=\tr(\rho^3)=1$, and minimized when $\rho_1, \rho_2, \rho_3$ are mutually orthogonal, delivering bounds
\begin{align}
\left(\frac{d}{n}\right)^2\frac{1}{(d+1)(d+2)} &\leq \sum_i P(E_i|\rho_1)P(E_i|\rho_2)P(E_i|\rho_3)\nonumber\\
&\leq \left(\frac{d}{n}\right)^2\frac{6}{(d+1)(d+2)}.
\end{align}
Since in both cases, the upper-bounds are saturated by identical pure-states, we conclude that pure-state probability assignments with respect to a 3-design lie in the nonnegative orthant, in the intersection of three kinds of spheres. From $\sum_i P(E_i|\rho)$, they live on a 1-norm sphere of radius 1; from $\sum_i P(E_i|\rho)^2$, they live on a 2-norm sphere of radius $\sqrt{\left(\frac{d}{n}\right)\frac{2}{d+1}}$; and from $\sum_i P(E_i|\rho)^3$, they live on a 3-norm sphere of radius $\sqrt[3]{\left(\frac{d}{n}\right)^2\frac{6}{(d+1)(d+2)}}$. Finally, since we derived all probabilities from the trace of a reference effect on a state, our probability-distributions live in $\text{col}(P)$.

We could continue on, contemplating the $t$-fold agreement-probability for an unbiased $t$-design reference device for any $t$. Since a pure-state satisfies $\forall t: \tr(\rho^t)=1$, we'd find that pure probability vectors live on $t$-norm spheres with fixed radii determined by the agreement-probability
\begin{align}
&P(\text{agree}|\rho^{\otimes t}) \\
&= \sum_i P(E_i|\rho)^t =\frac{d^t}{n^{t-1}}\binom{d+t-1}{t}^{-1}\frac{1}{t!}\sum_{\pi\in S_t}\tr(T_\pi\rho^{\otimes t})\nonumber\\
&= \frac{d^t}{n^{t-1}}\binom{d+t-1}{t}^{-1},
\end{align}
where the last follows from the fact that since $\forall t: \tr(\rho^t)=1$, all of the $t!$ terms in the sum will be 1.

The following lemma \cite{Jones_2005, fuchs2015strugglesblockuniverse}, however, assures us that a $3$-design is all we need.
\begin{lemma}
	A Hermitian operator $A$ is a rank-1 projector if and only if $\tr(A^2) = \tr(A^3)=1$. 
\end{lemma}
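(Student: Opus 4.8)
The plan is to treat the two directions separately, noting that the forward implication is essentially immediate while the converse carries all the content. If $A$ is a rank-1 projector, then $A = |\psi\rangle\langle\psi|$ for some unit vector, so $A^2 = A^3 = A$ and $\tr(A) = 1$; hence $\tr(A^2) = \tr(A^3) = \tr(A) = 1$ at once. I would spend no real effort here and turn immediately to the converse.

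For the converse, the natural move is to diagonalize. Since $A$ is Hermitian on the finite-dimensional space $\mathcal{H}_d$, the spectral theorem supplies real eigenvalues $\lambda_1, \dots, \lambda_d$, and the two hypotheses become the power-sum constraints $\sum_i \lambda_i^2 = 1$ and $\sum_i \lambda_i^3 = 1$. The goal is to show these force the spectrum to consist of a single $1$ together with zeros. First I would extract from $\sum_i \lambda_i^2 = 1$ the elementary bound $|\lambda_i| \le 1$ for every $i$, since any one square is at most the total. The crux is then to combine this with the cubic constraint to pin each eigenvalue to the set $\{0,1\}$.

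The key step I would use is the manifestly nonnegative quantity $\sum_i \lambda_i^2(1-\lambda_i)^2$, whose vanishing says precisely that each $\lambda_i \in \{0,1\}$. Expanding and substituting the hypotheses gives $\sum_i \lambda_i^4 - 2\sum_i \lambda_i^3 + \sum_i \lambda_i^2 = \sum_i \lambda_i^4 - 1$. Because $|\lambda_i| \le 1$ forces $\lambda_i^4 \le \lambda_i^2$, we get $\sum_i \lambda_i^4 \le \sum_i \lambda_i^2 = 1$, so the whole expression is $\le 0$; being a sum of squares it must equal $0$, whence $\lambda_i \in \{0,1\}$ for all $i$. Finally $\sum_i \lambda_i^2 = 1$ simply counts the eigenvalues equal to $1$, so there is exactly one, and $A$ is a rank-1 projector.

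The only real obstacle is the converse, and within it the trick of choosing the right auxiliary sum of squares: we are given neither $\tr(A) = \sum_i \lambda_i$ nor $\tr(A^4) = \sum_i \lambda_i^4$, so a usable constraint must be manufactured from the two moments in hand. An equivalent route that avoids even naming $\tr(A^4)$ is to chain $1 = \sum_i \lambda_i^3 \le \sum_i |\lambda_i|^3 \le (\max_i |\lambda_i|)\sum_i \lambda_i^2 = \max_i|\lambda_i| \le 1$ and read off that equality throughout forces all eigenvalues nonnegative and every nonzero one equal to the maximal value $1$, again leaving a single eigenvalue equal to $1$. I would be prepared to present whichever of the two arguments reads more transparently.
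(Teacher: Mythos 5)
Your proposal is correct, and its overall strategy is the same as the paper's: apply the spectral theorem, deduce $|\lambda_i|\le 1$ from $\sum_i \lambda_i^2 = 1$, force every eigenvalue into $\{0,1\}$ by an equality analysis, and then read off from $\sum_i \lambda_i^2 = 1$ that exactly one eigenvalue equals $1$. The only difference lies in the certificate used for the key step: the paper compares the moments directly, noting that $|\lambda_i|\le 1$ gives $\lambda_i^3 \le \lambda_i^2$ termwise, with equality precisely when $\lambda_i \in \{0,1\}$, and both sums equal $1$; your primary route instead manufactures the sum of squares $\sum_i \lambda_i^2(1-\lambda_i)^2 = \sum_i \lambda_i^4 - 1 \le 0$, which detours through the fourth moment but makes the equality case self-evident. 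Your alternative chain $1 = \sum_i \lambda_i^3 \le \sum_i |\lambda_i|^3 \le \max_i|\lambda_i| \le 1$ is essentially the paper's termwise comparison in slightly different clothing. Either variant is fine; the paper's version is marginally leaner (no fourth powers appear), while yours makes the equality case fully explicit, and you also supply the trivial forward direction, which the paper leaves implicit.
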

\begin{proof}
Let $\{\lambda_i\}$ be the eigenvalues of $A$. $\tr(A^2) = \tr(A^3)=1$  means that $\sum_i \lambda_i^2 = \sum_i \lambda^3 = 1$.
On the one hand, $\sum_i \lambda_i^2 = 1$ implies that $\forall i: -1 \leq \lambda_i \leq 1$. On the other hand, $\sum_i \lambda_i^3 \leq \sum_i \lambda_i^2$ with equality if and only if $\forall i: \lambda_i \in \{0, 1\}$. But since the whole sum must be 1, we must have exactly one $\lambda_i=1$ and the rest  0. Thus $A$ is a rank-1 projector, or equivalently a pure-state $\rho$.
\end{proof}
\noindent In light of this, as long $t\ge 3$, we can fully characterize the pure-states of quantum theory with respect to an unbiased $t$-design reference device by $\forall i: P(E_i|\rho)\ge0$ and 
\begin{align}
\sum_i P(E_i|\rho)&= 1	\\
\sum_i P(E_i|\rho)^2&= \left(\frac{d}{n}\right)\frac{2}{d+1}	\\
\sum_i P(E_i|\rho)^3 &= \left(\frac{d}{n}\right)^2\frac{6}{(d+1)(d+2)},
\end{align}
along with the consistency-condition required by overcompleteness $P(E_i|\rho)=\sum_{jk}P(E_i|\sigma_j)\Phi_{jk}P(E_k|\rho)$, that is, the probability-distribution lives in $\text{col}(P)$. In Appendix \ref{scalar-overcompleteness}, we show more specifically that a probability-distribution corresponds to a pure state if and only if its projection into $\text{col}(P)$ satisfies the quadratic and cubic constraints.

 Thus in a sense we have already achieved our goal: quantum state-space is the convex hull of all pure-state probability-assignments. The bounds on agreement-probabilities, from which the constraints on pure-states were derived, already suggest a kind of uncertainty principle: after all, classically, there is nothing in principle preventing the agreement-probability to be 1 when identical preparations are fed into each reference device or 0 for perfectly distinguishable preparations. In fact, the upper bounds may be understood as implying a set of \emph{entropic} uncertainty principles. Defining the order-$t$ R\'enyi entropy of a probability-distribution $\{P(E_i|\rho)\}$ as
\begin{align}
H_t\Big(\{P(E_i|\rho)\}\Big) &= \frac{1}{1-t}\log\left(\sum_i P(E_i|\rho)^t\right)	
\end{align}
for $0< t <\infty$, it is clear from the above discussion that pure-state probability-distributions achieve the lower-bound on the R\'enyi entropies of order $t=2,3$ over all states, and this, along with the restriction to $\text{col}(P)$, is enough to characterize them completely.

\subsection{The contour of idempotents}
\label{idempotents}

As an alternative approach, we can derive a single equation picking out pure-state probability-assignments by appealing to the fact that, for a normalized state, $\rho=\rho^2$ if and only if $\rho$ is pure. Substituting the resolution of the identity $\rho=\sum_{ij}\Phi_{ij}P(E_j|\rho)\sigma_i$ into $P(E_i|\rho^2)=\tr(E_i\rho^2)$, we find that
\begin{align}
\label{rho_rhosq}
&P(E_i|\rho)  \\
&= 	\sum_{lm} P(E_l|\rho)P(E_m|\rho)\sum_{jm}\Phi_{jl}\Phi_{km}\Re\big[\tr(E_i\sigma_j\sigma_k)\big]\nonumber.
\end{align}
We need only consider the real-part since  $\tr(E_i \sigma_k\sigma_j)=\tr\big( E_i (\sigma_j \sigma_k)^\dagger\big)=\tr\big( E_i (\sigma_j^* \sigma_k^*)^T\big)=\tr\big( E_i^T \sigma_j^* \sigma_k^*\big)=\tr\big( E_i^* \sigma_j^* \sigma_k^*\big)=\tr(E_i\sigma_j \sigma_k)^*$ as $E_i, \sigma_j, \sigma_k$ are all positive-semidefinite and so every term in Eq.\ \ref{rho_rhosq} is added to its complex conjugate.

 Let $\mathcal{M}_t= \int |\psi\rangle\langle\psi|^{\otimes t}d\psi$ be the $t$-th moment of quantum state-space. This is itself a valid state, and so we can consider its probability-distribution $P(E_i, E_j, E_k, \dots|\mathcal{M}_t)$ with respect to $t$ copies of the reference measurement. If we assume an unbiased set of effects, by the same argument as in Section \ref{Agreement-probabilities}, it follows that
 \begin{small}
\begin{align}
&P(E_i|\mathcal{M}_1) = \frac{1}{n}	\\
&P(E_i, E_j|\mathcal{M}_2) =\frac{1}{d+1}\left(\frac{1}{n}\right)\left[\frac{d}{n} + P(E_i|\sigma_j)\right]\label{m2}\\
\label{any_triples}&P(E_i, E_j, E_k|\mathcal{M}_3)=\frac{1}{(d+1)(d+2)}\left(\frac{d}{n^2}\right)\times \\
&\Bigg[ \frac{d}{n}+  P(E_j|\sigma_k)  +P(E_i|\sigma_j) +P(E_i|\sigma_k) + 2\Re \big[\tr(E_i \sigma_j \sigma_k)\big]\Bigg].\nonumber	
\end{align}
\end{small}
\!\!If we further assume that the reference states form a 3-design then $\mathcal{M}_3=\frac{1}{n}\sum_i \sigma_i^{\otimes 3}$ and so $P(E_i, E_j, E_k|\mathcal{M}_3)= \frac{1}{n}\sum_{m}P(E_i|\sigma_m)P(E_j|\sigma_m)P(E_k|\sigma_m)$. Equating these two expressions allows us to calculate $\Re \big[\tr(E_i \sigma_j \sigma_k)\big]$ directly from the conditional-probability matrix $P(E_i|\sigma_j)$ which characterizes the reference device itself (Appendix \ref{big-real-part}). Then Eq.\ \ref{rho_rhosq}, which expresses $\rho=\rho^2$ in terms of probability-assignments,  simplifies to
\begin{align}
\label{quadratic}
&P(E_i|\rho)\\ &=\frac{1}{2}\Bigg[\frac{1}{2}(d+1)(d+2)\left(\frac{n}{d}\right)\sum_m P(E_i|\sigma_m)P(E_m|\rho)^2-\frac{d}{n}\Bigg]\nonumber,
\end{align}
which depends only upon $P(E_i|\sigma_j)$ and $P(E_i|\rho)$. A probability-distribution satisfying Eq.\ \ref{quadratic} is clearly in $\text{col}(P)$, and it is straightforward to check that the Eq.\ \ref{quadratic} implies the scalar constraints proved in the previous section.

Finally, we note that Eq.\! \ref{any_triples} implies that access to preparations of $\mathcal{M}_3$ allow one to extract $\Re[\tr(E_i\sigma_j\sigma_k)]$ from the joint probability distribution $P(E_i, E_j, E_k|\mathcal{M}_3)$ for \emph{any} measurement $\{E_i\}$. One may compare this method to the procedure described in \cite{oszmaniec2021measuringrelationalinformationquantum}, which exploits the backaction on a control qubit after a controlled cyclic permutation to extract the real-part of a trace of an $n$-product of states from an expectation value. Having estimated $\Re[\tr(E_i\sigma_j\sigma_k)]$ for any informationally-complete measurement, we may characterize pure-state probability-distributions with respect to that measurement according to Eq.\! \ref{rho_rhosq}. What makes a 3-design distinctive is that $\Re[\tr(E_i\sigma_j\sigma_k)]$ can be calculated from $P(E_i|\sigma_j)$ alone, and so pure-state probability-distributions can be characterized by appealing to probabilities assigned to one single device.

\subsection{A variance-based uncertainty principle}

We now give a condition for the validity of \emph{any} probability-assignment, pure or mixed. For a 3-design measurement, the variance of an observable as measured by a standard von Neumann measurement can be directly related to the variance of the same observable as estimated by the reference device, modulo overcompleteness. From this consideration, we can characterize the validity of any probability-distribution in terms of a lower-bound on the variance of all observables of a natural class.

We begin by noting that $\rho$ is positive-semidefinite if and only if its second-moment with respect to all Hermitian observables $X$ is nonnegative:
\begin{align}
\forall X: \tr(X^2\rho)\ge0 \Longleftrightarrow \rho \ge0.	
\end{align}
This follows from the fact that $X^2\ge0$ by construction, and the fact that the cone of positive-semidefinite matrices is self-dual \cite{Faraut1994}. More simply, one can observe that $\rho\ge 0$ is equivalent to $\forall \psi: \langle\psi|\rho|\psi\rangle=\tr(|\psi\rangle\langle \psi|\rho)\ge0$, and any $X^2$ can be decomposed into a sum of rank-1 projectors weighted by positive numbers via the spectral decomposition. 

Substituting  $X=\sum x_i E_i$ and $\rho=\sum_{ij}\Phi_{ij}P(E_j|\rho)\sigma_i$, we find that
\begin{align}
\label{vn_var}
&\forall X: \tr(X^2\rho) =\\
&\left(\frac{d}{n}\right)\sum_{ijkl} x_ix_j\Re[\tr(E_i\sigma_j\sigma_k)]\Phi_{kl}P(E_l|\rho) \ge 0.	\nonumber
\end{align}
As shown in  Appendix \ref{variance-overcompleteness}, if we assume that $x \in \text{col}(P)$, and exploit the expression for $\Re[\tr(E_i\sigma_j\sigma_k)]$ in terms of $P(E_i|\sigma_j)$, $\tr(X^2\rho)$ reduces to
\begin{align}
\label{simplified_var}
&\tr(X^2\rho)\\
&=\frac{1}{2}\left(\frac{d+2}{d+1}\right)\Bigg[\langle X^2\rangle_\rho-\frac{d}{d+2}\Big(\langle X^2\rangle_\mu-2\langle X\rangle_\mu\langle X\rangle_\rho\Big)\Bigg],\nonumber
\end{align}
where e.g. $\langle X^2\rangle_\rho = \sum_i x_i^2 P(E_i|\rho)$ and $\forall i: P(E_i|\mu)=\frac{1}{n}$ are the probabilities for the maximally mixed state. The restriction that $x \in \text{col}(P)$ is natural since already any Hermitian observable $X$ can be expressed with respect to the reference device in this form. Remarkably, this expression relates the second-moment of $X$ with respect to a standard von Neumann measurement, whose outcomes are eigenvalues of $X$, to the second-moment of $X$ \emph{with respect to the reference device}, where now the $x_i$'s are interpreted as numerical values assigned to reference measurement outcomes. Only for a 3-design is such a simple relationship possible, and this is what will allow us to characterize valid probability-distributions in terms of a lower-bound on the variance with respect to the reference device.

The restriction  $x \in\text{col}(P)$ (for real $x$) is equivalent to the assumption that $x_i=\tr(E_i\tilde{X})$ for some Hermitian $\tilde{X}$. By the 2-design property, if $X=\sum_i x_i E_i$ for $x_i=\tr(E_i\tilde{X})$, then $\tilde{X}=\left(\frac{n}{d}\right)\Big[(d+1)X - \tr(X)I\Big]$. Consequently, if the RHS of Eq.\ \ref{simplified_var} is nonnegative for some real $x \in \text{col}(P)$, then $\tr(X^2\rho)\ge0$ for $X=\frac{1}{d+1}\left(\frac{d}{n}\right)\Big[\tilde{X}+\tr(\tilde{X})I\Big]$; and if the RHS is nonnegative for all real $x \in \text{col}(P)$, $\tr(X^2\rho)\ge0$ for all Hermitian $X$. We conclude that probability-assignments $P(E_i|\rho)$ are valid if and only if
\begin{align}
\label{the_uncertainty_principle}
&\forall x\in \text{col}(P):\\
&\text{Var}[X]_\rho\ge \frac{d}{d+2}\Big(\langle X^2\rangle_\mu -2\langle X\rangle_\mu\langle X\rangle_\rho\Big)-\langle X\rangle_\rho^2,\nonumber
\end{align}
where  $\text{Var}[X]_\rho=\sum_i x_i^2 P(E_i|\rho) - \left(\sum_i x_i P(E_i|\rho)\right)^2$. In this way, the shape of quantum state-space can be understood in terms of a variance-based uncertainty principle: valid probability-assignments on reference outcomes cannot be too sharp lest they violate a lower-bound on the variance for any observable in $\text{col}(P)$.  We note that a related inequality was derived recently in \cite{chen2024nonstabilizernessenhancesthriftyshadow} in the context of bounding the variance of expectation values in shadow estimation.

\section{The Jordan product}

We can shed further light on the special role that 3-designs play in encoding the shape of quantum state-space by considering their relationship to the \emph{Jordan algebra} of observables. To see, this, let us return to the inequality in Eq.\! \ref{vn_var}, which we may reshape  into $\forall x: \sum_i x_i [\mathcal{L}_\rho]_{ij} x_j \ge0$, where
\begin{align} 
	\label{L}
	[\mathcal{L}_\rho]_{ij}=\sum_{kl} \Re[\tr(E_i\sigma_j\sigma_k)]\Phi_{kl}P(E_l|\rho).
\end{align}
	 From this we conclude that probability-assignments $P(E_i|\rho)$ are valid if and only if $\mathcal{L}_\rho$ is positive-semidefinite. In particular, for an unbiased 3-design reference device, Eq.\! \ref{L} simplifies to
\begin{align}
	&[\mathcal{L}_\rho]_{ij}\\
	&=\frac{1}{2}\Bigg[(d+1)(d+2)\left(\frac{n}{d}\right)\sum_mP(E_m|\sigma_i)P(E_m|\sigma_j)P(E_m|\rho)\nonumber\\
	&- P(E_i|\sigma_j) -P(E_i|\rho)-P(E_j|\rho)-\frac{d}{n}\Bigg],\nonumber
\end{align}
which has the virtue of depending only upon reference device probabilities, and which provides a straightforward way to check whether the variance-bound is satisfied.

But there is another interpretation of the operator $\mathcal{L}_\rho$. Recall that under the Jordan product $A\circ B=\frac{1}{2}(AB+BA)$, Hermitian matrices over $\mathbb{C}$ form a \emph{Euclidean Jordan algebra} \cite{Faraut1994,stacey2023quantumtheorysymmetrybroken, wilce2017royalroadquantumtheory}. A Jordan algebra is a nonassociative algebra which satisfies the commutative law and the Jordan identity,
\begin{align}
A \circ B &= B \circ A \nonumber\\
A^2 \circ (B \circ A) &= A \circ (B \circ A^2) \nonumber.
\end{align}
If we define $L_A$ to be the linear operator which takes the Jordan product with $A$, that is, $L_A(B)=A\circ B$, the Jordan identity is equivalent to $\big[L_A, L_{A^2}\big]=0$. A \emph{Euclidean} Jordan algebra enjoys the additional property that there exists an inner-product on the underlying vector space $\mathcal{V}$ such that $\forall A, B, C \in \mathcal{V}: \langle L_A (B), C\rangle=\langle B, L_A(C)\rangle$.

 By introducing an informationally-complete reference device, we identify quantum states with probability distributions. We may then ask: how can we represent the Jordan product in terms of probabilities? If $|\rho \circ \tau)=L_\rho|\tau)$,  using the resolution of the identity $\B{S}\Phi \B{E}=I$, we have
\begin{align}
\B{E}L_\rho|\tau)=\B{E}L_\rho\B{S}\Phi \B{E}|\tau)=\mathcal{L}_\rho\Phi P(E|\tau),
\end{align}
where $\mathcal{L}_\rho=\B{E}L_\rho\B{S}$, whose matrix-elements are
\begin{align}
\label{Lelements}
[\mathcal{L}_\rho]_{ij} &= \tr(E_i L_\rho(\sigma_j))=\frac{1}{2}\big(\tr(E_i\rho \sigma_j) + \tr(E_i\sigma_j\rho)\big)\nonumber\\
&=	\sum_{kl} \Re[\tr(E_i\sigma_j\sigma_k)]\Phi_{kl}P(E_l|\rho).
\end{align}
Indeed, this is precisely the matrix we developed earlier, whose positive-semidefiniteness diagnoses the validity of probability-assignments $P(E_i|\rho)$. 

Significantly, Eq.\! \ref{Lelements} reveals that the three-index tensor $ \Re[\tr(E_i\sigma_j\sigma_k)]$ encodes the structure-coefficients for the Jordan product on $d\times d$ Hermitian matrices over $\mathbb{C}$, and thus fully defines it by its action on the reference states and effects. Since pure-states are idempotents of the Jordan algebra  with trace 1, the structure-coefficients implicitly determine the geometry of the state-space. At the same time, we have shown that the components of this tensor can be extracted from the joint probability distribution $P(E_i,E_j,E_k|\mathcal{M}_3)$.  Finally, taking our reference device to be an unbiased 3-design means that  $P(E_i,E_j,E_k|\mathcal{M}_3)=\frac{1}{n}\sum_m P(E_i|\sigma_m)P(E_j|\sigma_m)P(E_k|\sigma_m)$ so that $P(E_i|\sigma_j)$ \emph{alone} is sufficient to characterize the Jordan product, and through this algebraic structure, the entire state-space. 

\section{Conclusion}

We have thus shown that the shape of quantum state-space can be understood in terms of an uncertainty principle which constrains the probabilities one ought to assign to the outcomes of a 3-design reference device. Compatibility with this uncertainty principle can be diagnosed through the positive-semidefiniteness of a particular operator $\mathcal{L}_\rho$ constructed from reference probabilities. We have also provided a set of scalar constraints that pick out pure-state probability-distributions, which can alternatively be summarized by a single vector constraint. Conceptually, these constraints be understood as entropic uncertainty principles, and operationally they relate to the agreement-probability on multiple copies of the reference device.

 Crucially, each term that appears in our equations is grounded in a probability-assignment, and even better, these probabilities refer to the behavior of a single measure-and-reprepare reference device. The possibility of achieving this rests on the delicate interplay between unitary symmetry and the Jordan algebra of observables.  The algebraic structure of quantum theory implies that the 3rd moment of quantum state-space determines them all, and so does a reference measurement furnished by a 3-design. This further vindicates the centrality of 3-designs already suggested by their optimality in classical shadow estimation tasks.

 Our result holds particular significance for the QBist research program in the foundations of quantum mechanics. QBism argues that quantum theory is best understood not as a description of physical reality, but rather as a set of normative guidelines for gambling on the consequences of one's actions in a world undergoing ceaseless creation \cite{fuchs2019qbismquantumtheoryheros, fuchs2023qbismnext, debrota2024quantumdynamicshappenspaper}. Consequently, QBist ``reconstructions'' of quantum mechanics proceed \cite{DeBrota_2021, Appleby_2017} by motivating the constraints on probability-assignments implied by quantum theory in the same spirit in which de Finetti derived the usual rules of probability theory by contemplating what constraints a gambler ought to place on their different bets in order to prevent a sure loss.

The simplicity of our result is very much in the spirit of \cite{stacey2023quantumtheorysymmetrybroken}, which suggests that the constraints implied by quantum theory are in some sense the ``most symmetrical'' compatible with what the author calls ``vitality,'' i.e.\ the non-existence of a hidden-variable model. Indeed, our result shows that these constraints may be understood as a fundamental expression of complementarity. That said, our derivation presumes a prior knowledge of traditional quantum theory: the significance of our work here is that it exposes the structure that must be aimed for in any future reconstructive effort. 

 For instance, previous efforts at QBist reconstruction  \cite{Appleby_2017} took as their starting place the generalization of the image of quantum state-space within the probability simplex induced by a symmetric informationally-complete (SIC) reference measurement. SIC measurements have a host of virtues: the corresponding states form a simplex in quantum state-space whose vertices are pure states; the conjecture of their existence in any Hilbert space dimension has lead to a fruitful and unexpected interplay between physics and algebraic number theory \cite{Appleby_2017}. SIC states, however, only form 2-designs, and thus the Jordan structure-coefficients cannot be extracted directly from probability-assignments assigned to a single reference device. Breaking this barrier is the central innovation of the present work. 
 
More specifically, the SIC-based reconstructive effort began from the observation that the 2-design condition implies  the inner product between any two probability vectors must lie between certain upper and lower bounds. Inspired by this, \cite{Appleby_2017} defined a \emph{qplex} to be set of probability vectors which mutually satisfy these bounds, to which no more elements can be added without inconsistency, and which contains a simplex of pure states (corresponding to a SIC). The goal of the program was to motivate these bounds on independent grounds, situate quantum theory in the vaster landscape of qplexes, and provide a principle by which quantum theory could be identified within this landscape. This approach is consonant with developments in quantum foundations over the last 25 years where the study of so-called \emph{generalized probabilistic theories} \cite{M_ller_2021, barnum2013postclassicalprobabilitytheory} has played a central role in providing new perspectives on quantum theory.

 The authors of \cite{Appleby_2017} demonstrated that any qplex whose symmetry group is a stochastic subgroup of the orthogonal group isomorphic to the unitary group must correspond to quantum theory, and vice versa: moreover, the existence of such a subgroup is equivalent to the existence of a particular SIC. At the same time, the authors left open the possibility of there being a simpler principle which could pick out quantum theory among the qplexes. The present work shows that demanding the fundamental reference measurement to have the properties of a 3-design, rather than a  2-design, means that a single finite set of probability distributions is sufficient to characterize the theory, which is much more tractable, analytically and computationally, as well as more conceptually satisfying.
 
 Just as a qplex generalizes the representation of quantum mechanics according to a SIC, it is natural to consider analogous generalizations of a 3-design representation, what we might call \emph{3-qplexes}. For instance, one could explore the landscape of all state-spaces defined by an uncertainty principle as in Eq.\! \ref{the_uncertainty_principle}, without at first restricting $P(E_i|\sigma_j)$, the probabilities that characterize the reference device, to correspond to an actual quantum 3-design. Which properties of quantum theory are preserved in such generalized theories, and which fall by the wayside? Just as identifying quantum theory among the qplexes led to an alternative characterization of SICs themselves, picking out quantum theory among the 3-qplexes would lead to an alternative characterization of 3-designs. A new approach to identifying and constructing 3-designs, in particular, of self-testing them, would have significant practical application in quantum computing and beyond. 

\section*{Acknowledgments}
The author would like to thank Christopher Fuchs, Blake Stacey, Sachin Gupta, Gianluca Cuffaro, Arthur Parzygnat, and Marcus Appleby for many helpful discussions.

This work was supported in part by National Science Foundation Grant 2210495 and in part by Grant 62424 from the John Templeton Foundation. The opinions expressed in this publication are those of the author and do not necessarily reflect the views of the John Templeton Foundation.

\bibliography{three}

\appendix 

\section{$\Phi$ for an unbiased 2-design}
\label{2-design-Phi}

For an unbiased 2-design $\{\sigma_i\}_{i=1}^n$, $\frac{1}{n}\sum_i \sigma_i^{\otimes 2}=\frac{1}{d(d+1)}(I \otimes I + \mathcal{S})$. Taking the partial transpose yields
\begin{align}
	\frac{1}{n}\sum_i \sigma_i \otimes \sigma_i^T &=\frac{1}{d(d+1)}\left(I \otimes I + |I)(I|\right).
\end{align}
For a pure-state $\sigma$, $|\sigma)(\sigma|=\sigma \otimes \sigma^T$ where $|\sigma)=\text{vec}(\sigma)$. Letting $E_i=\frac{d}{n}\sigma_i$, we arrive at the resolution of the identity $I = 	(d+1)\sum_i |\sigma_i)(E_i| -|I)(I|$. Comparing this to $\B{S}\Phi \B{E}=I=\sum_{ij}\Phi_{ij}|\sigma_i)(E_j|$, it follows that we may take $\Phi=(d+1)I-\frac{d}{n}J$, where $J$ is the matrix of all 1's.

\section{Scalar conditions and overcompleteness}
\label{scalar-overcompleteness}
Let $\rho=\sum_{ij}\Phi_{ij}P(E_j|\rho)\sigma_i$. Then $\tr(\rho^2)=\frac{n}{d}\sum_{ij}\Phi_{ij}P(E_j|\rho)\sum_{kl}P(E_i|\sigma_k)\Phi_{kl}P(E_l|\rho)$. Decomposing $P(E|\rho)=x+y$, for $x\in \text{col}(P)$ and $y \in \text{col}(P)^\perp$,  and using the form of $\Phi$ for an unbiased 2-design, we find $\tr(\rho^2)=(d+1)\left(\frac{n}{d}\right)\sum_ix_i^2 - 1$. Thus if and only if $\sum_i x_i^2=\left(\frac{d}{n}\right)\frac{2}{d+1}$ does $\tr(\rho^2)=1$. The result is analogous for $\tr(\rho^3)=1$, using  the expression for $\Re \big[\tr(E_i \sigma_j \sigma_k)\big]$  derived in section \ref{idempotents}.

\section{The real-part of the triple-products}
\label{big-real-part}
\begin{align}
&\Re \big[\tr(E_i \sigma_j \sigma_k)\big]\\
&=	\frac{1}{2}\Bigg[(d+1)(d+2)\left(\frac{n}{d}\right)\sum_mP(E_i|\sigma_m)P(E_j|\sigma_m)P(E_k|\sigma_m)\nonumber\\
&-  P(E_j|\sigma_k)-P(E_i|\sigma_j)-P(E_i|\sigma_k)-\frac{d}{n}\Bigg]\nonumber
\end{align}

\section{Simplifying the variance}
\label{variance-overcompleteness}
Let $x$ be an assignment of real numerical values to the outcomes of an unbiased 3-design reference device. This is equivalent to the assignment of a self-adjoint operator $X$ since $\langle X\rangle_\rho=\sum_i x_i P(E_i|\rho)=\tr\left(\sum_i x_iE_i\rho\right)=\tr(X\rho)=\langle X\rangle_\rho^{(\text{vn})}$, although in an overcomplete representation different choices of $x$ will yield the same operator $X$. The variance with respect to a standard von Neumann measurement of $X$ is $\text{Var}[X]^{\text{(vn)}}_\rho = \tr(X^2\rho)-\tr(X\rho)^2$, where $\tr(X^2\rho)= \left(\frac{d}{n}\right)\sum_{ijkl} x_ix_j\Re[\tr(E_i\sigma_j\sigma_k)]\Phi_{kl}P(E_l|\rho)$. Substituting in the expression for $\Re[\tr(E_i\sigma_j\sigma_k)]$ yields
\begin{small}
	\begin{align}
	\label{ugly_variance}
&\tr(X^2\rho)\\
&=\frac{1}{2}\Bigg[(d+1)(d+2)\sum_{k} \left(\sum_i P(E_k|\sigma_i)x_i\right)^2P(E_k|\rho)\nonumber\\
&-\left(\frac{d}{n}\right)\sum_{ij}x_iP(E_i|\sigma_j)x_j-2d\langle X\rangle_\mu\langle X\rangle_\rho-d^2\langle X\rangle_\mu^2\Bigg]\nonumber,
\end{align}	
\end{small}
\!\!Let us assume that $x \in \text{col}(P)$ so that $x_i=\tr(E_i \tilde{X})$ for some Hermitian $\tilde{X}$. By the 2-design property,
\begin{small}
\begin{align}
\sum_jP(E_i|\sigma_j)x_j&=
	\sum_j	P(E_i|\sigma_j)\tr(E_j\tilde{X}) \\
	&=d\tr\left(\frac{1}{n}\sum_j \sigma_j^{\otimes 2} (E_i \otimes \tilde{X}) \right)\\
	&=\frac{1}{d+1}\tr\big((I + \mathcal{S})(E_i \otimes \tilde{X})\big)\\
	&=\frac{1}{d+1}\left(\frac{d}{n}\tr(\tilde{X})+\tr(E_i\tilde{X})\right),
\end{align}
\end{small}
where $\tr(\tilde{X})=\sum_i \tr(E_i\tilde{X})=\sum_i x_i=n\langle X\rangle_\mu$, so that 
\begin{align}
	\sum_j P(E_i|\sigma_j)x_j&=\frac{d\langle X\rangle_\mu +x_i}{d+1}\\
	\left(\sum_j P(E_i|\sigma_j)x_j \right)^2&=\frac{d^2\langle X\rangle_\mu^2+2d\langle X\rangle_\mu x_i+x_i^2}{(d+1)^2}\\
	\sum_{ij}x_iP(E_i|\sigma_j)x_j&=\frac{n\big(d\langle X\rangle_\mu^2+\langle X^2\rangle_\mu\big)}{d+1}.
\end{align}
 Substituting these expressions into Eq.\ \ref{ugly_variance} yields Eq.\ \ref{simplified_var}.

\end{document}